\newtheorem{theorem}{Theorem}[section]
\newtheorem{remarks}{Remarks}[section]
\begin{document}

\title{An Information Theoretic Point of View to Contention Resolution
}

\begin{abstract}
We consider a slotted wireless network in an infrastructure setup
with a base station (or an access point) and $N$ users.
The wireless channel gain between the base station and the users is
assumed to be i.i.d., and the base
station seeks to schedule the user with the highest channel gain in every slot (opportunistic scheduling).
We assume that the
identity of the user with the highest channel gain is resolved using a series of contention slots and with feedback from the base station.
In this setup, we formulate the
contention resolution problem for opportunistic scheduling as
identifying a random threshold (channel gain) that separates the best channel
from the other samples.
We show that the average delay to resolve contention is related to the entropy of the
random threshold.

We illustrate
our formulation by studying the opportunistic splitting algorithm
(OSA) for i.i.d. wireless channel \cite{qin-berry-osa}. We note that
the thresholds of OSA correspond to
a maximal probability allocation scheme. We
conjecture that maximal probability allocation is an entropy
minimizing strategy and a delay minimizing strategy for i.i.d. wireless channel.
Finally, we discuss the applicability
of this framework for few other network scenarios.
\end{abstract}

\maketitle

\section{Introduction}
The advancements in the physical layer technology has enabled cellular networks (e.g., 3G and 4G deployments like Mobile WiMAX, LTE Advanced) and WLANs (e.g., IEEE 802.11n) support hundreds of megabits per second.
However, with more and more users now accessing the Internet using wireless as the last mile, there is a continuous necessity to judiciously use the available network resources.
Cross-layer strategies have become extremely helpful in supporting the ever increasing demand for bandwidth and stringent QoS.
Opportunistic scheduling and multiuser diversity (see \cite{liu-etal-os})
is one such popular cross-layer technique recommended in current cellular standards and in ad hoc deployments for increasing the available network capacity. Unlike the wired channel, the wireless channel will always be constrained by fading and interference. Multiuser diversity enhances the network performance by wisely scheduling the users when their relative channel conditions are better. Opportunistic scheduling is known to significantly improve the network performance, especially for elastic traffic with loose delay constraints.

Opportunistic scheduling involves learning the channel state information of the contending users and scheduling the user with a relatively better channel. Centralized schemes like polling incur a lot of overhead and may not scale well with the number of users.
For such schemes, the rate region of the channel and the set of feasible QoS are well known (see e.g., \cite{neely-etal-powerrouting}).
The performance of the system with partial channel state information was studied in \cite{gopalan-etal-partialcsi}.
There is a lot of interest in developing distributed and semi-distributed algorithms for opportunistic scheduling. One popular technique has been to adjust the backoff parameters of the nodes based on their instantaneous channel gain. A number of works have studied the optimal performance and the achievable throughput of such strategies (see e.g., \cite{qin-berry-multiuser}).
In \cite{qin-berry-osa}, the authors propose a splitting algorithm that resolves contention with feedback from the base station. The distributed strategies incur losses due to collisions but are known to very efficient especially for networks with a large number of users.

In this work, we are interested in the contention resolution problem of resolving the identity of the user with the highest channel gain. We formulate the contention resolution problem for opportunistic scheduling as identifying a random threshold (channel gain) that separates the best channel from the other samples. We show that the average delay to resolve contention is related to the entropy of the threshold random variable. We illustrate our formulation by studying the opportunistic splitting algorithm \cite{qin-berry-osa}. We show that OSA is a maximal probability allocation scheme and we conjecture that MPA is an entropy minimizing strategy and a delay minimizing strategy as well.
In this work, we have studied opportunistic scheduling for $N$ users with i.i.d. channel gains.
We believe that our formulation of contention resolution as a source code can help develop optimal strategies for a variety of other network scenarios as well.

\subsection{Related Literature}

The idea of splitting with ternary feedback was originally proposed for scheduling users in Aloha type networks (see \cite{gallager-conflictresolution}).
In \cite{arrow-etal-binaryquestions}, Arrow et al., study a problem of resolving the user with the highest sample value with binary type questions. The optimal strategy was studied when accurate feedback of the number of contending users involved in every slot was available.
The near optimality of greedy strategies (like MPA studied in Section~\ref{sec:osa_source_coding}) was also discussed in \cite{arrow-etal-binaryquestions}.
In \cite{anantharam-varaiya-conflictresolution}, Anantharam and Varaiya prove the optimality of
binary type questions to minimize the average delay in \cite{arrow-etal-binaryquestions}. The performance of binary type questions in the presence of ternary feedback was first reported in \cite{cohen-etal-ternaryfeedback}. The optimal thresholds were obtained and the relevance to opportunistic scheduling was discussed.

In \cite{qin-berry-osa}, Qin and Berry study splitting with ternary feedback for opportunistic scheduling for i.i.d. wireless channel. We have briefly described the algorithm in Section~\ref{sec:osa}; we motivate our formulation of contention resolution as a source coding problem by studying the opportunistic splitting algorithm presented in \cite{qin-berry-osa}.
Splitting algorithms have been studied for other network and channel scenarios as well.
In \cite{kessler-sidi-splittingnoisy}, Kessler and Sidi study splitting algorithms for noisy channel feedback. In
\cite{qin-berry-osafairness}, Qin and Berry report the performance of splitting for different notions of fairness.
In \cite{yu-giannakis-sic}, Yu and Giannakis study the performance of splitting with successive interference cancellation in a tree algorithm.
In this work, we restrict to i.i.d. wireless channel under ideal channel assumptions; our aim is to present an alternate formulation for contention resolution using a source coding framework.

There are number of works concerning distributed opportunistic feedback schemes for wireless systems (see e.g., \cite{tang-etal-opportunisticfeedback}).
In \cite{qin-berry-multiuser}, Qin and Berry proposes a channel aware ALOHA and characterizes its performance.
In \cite{patil-deveciana-reducingfeedback}, Patil and de Veciana discuss about reducing feedback for opportunistic scheduling to support best effort and real time traffic.
In this work, we consider a semi-distributed framework where the base station helps resolve contention with feedback.

\subsection{Outline}
In Section~\ref{sec:network_model}, we describe the network model and the opportunistic resolution problem.
In Section~\ref{sec:osa}, we briefly describe the opportunistic splitting algorithm from \cite{qin-berry-osa} and motivate our formulation.
In Section~\ref{sec:source_coding}, we present contention resolution problem
for opportunistic scheduling as a source coding problem. In Section~\ref{sec:osa_source_coding}, we characterize
OSA using a maximal probability allocation scheme and study its performance. In Section~\ref{sec:two_examples}, we discuss
the applicability of our framework for other network scenarios and in Section~\ref{sec:conclusion}, we conclude the
paper and discuss future work.

\section{Network Model}
\label{sec:network_model}
We consider the downlink wireless channel of a single cell of a cellular data network (or of a single cell WLAN in an infrastructure setup).
A fixed number of users, $N$, share the slotted wireless channel over time.
We assume that the channel gain between the base station and the
wireless users is independent and identically distributed
with a common continuous distribution $F(\cdot)$.
We also assume that the users have knowledge of the common channel distribution $F(\cdot)$ and the number of users in the network, $N$.

Let $(H_{n,1}, H_{n,2}, \cdots, H_{n,N})$ represent the vector channel gain of the users in slot $n$.
We assume that every user $i$ would know its instantaneous channel gain $H_{n,i}$ at the beginning of every slot, but that information is not available with other users in the network, including the base station.
The channel state information $H_{n,i}$ can be made available to the user $i$ by the transmission of a pilot signal by the base station at the beginning of the slot.
The base station seeks to identify and schedule
the user with the highest channel gain in every slot (opportunistic scheduling),
i.e., the base station seeks to schedule
\[ \arg \max_{\{i=1,\cdots,N\}} \{H_{n,1}, H_{n,2}, \cdots, H_{n,N}\} \]
in slot $n$.
Define $X_{n,i} := F(H_{n,i})$, the cumulative distribution value in the slot $n$. Then, the vector $(X_{n,1}, X_{n,2}, \cdots, X_{n,N})$ is i.i.d. Uniform in $[0,1]$ for any channel distribution $F(\cdot)$. Further, the contention resolution problem can equivalently be described as
\[ \arg \max_{\{ i=1,\cdots,N\}} \{X_{n,1}, X_{n,2}, \cdots, X_{n,N}\}\]
Hence, without loss of generality, we will assume that $F(\cdot)$ is Uniform in $[0,1]$ and consider $(X_{n,1}, X_{n,2}, \cdots, X_{n,N})$ as the channel gain variables.

The base station resolves the identity of the user with the highest channel gain by
coordinating the contention resolution process and by providing necessary feedback to aid in the resolution.
We assume that a time slot comprises of $K$ mini slots, where the mini slots are used to resolve the contention.
For example, the users can transmit MAC packets (like RTS/CTS in IEEE 802.11 DCF), possibly with some channel information, to the base station in a minislot and the base station can feed back the state of the contention in that slot.
We assume that the base station feeds back the result of the contention within the minislot and the feedback of the base station is received by all the nodes in the network without any error.
At the end of the contention process, the user that succeeded in the contention is permitted to transmit data in the remainder of the slot.
In this setup, an objective of the base station would be to minimize the average number of minislots required to identify the user with the highest channel gain.

\section{Opportunistic Splitting}
\label{sec:osa}
In this section, we briefly describe a contention resolution strategy, opportunistic splitting algorithm (OSA) from \cite{qin-berry-osa}, for a fixed number of users $N$ and for i.i.d. block fading wireless channel.
Polling for opportunistic scheduling requires $N$ minislots to identify the user with the highest channel gain.
OSA is a distributed medium access control protocol that uses ternary feedback to identify the user with the best channel with a constant overhead.

A time slot is assumed to comprise of a maximum of $K$ minislots which are used for contention resolution.
In every minislot, OSA describes a continuous range in $[0,1]$ (the sample space of the Uniform random variable), $(y_{min}, y_{max}] \subset [0,1]$; only the user(s) whose channel gain values fall within the range will transmit contention resolution packets in the minislot.
At the end of the minislot, every user receives a feedback from the base station of $0$ or $1$ or $e$, indicating if the minislot was idle (no transmission), contained a successful packet transmission or involved an error due to collision, respectively.
If the feedback is $1$, the lone transmitter is declared the winner of the contention and is permitted to transmit data for
the remaining duration of the slot.
If the feedback is $0$ or $e$, then the range is suitably adjusted and the contention resolution process continues until either a success occurs or the time-slot ends.

The following pseudo-code describes the OSA algorithm for a fixed number of users $N$ and for i.i.d. channel gain (see \cite{qin-berry-osa} for more details). In the pseudo-code, $f$ denotes the feedback in a minislot and $k$ is the count of the number of minislots used for contention resolution.

\begin{algorithmic}
\State Initialize: $y_{low} = 0, y_{min} = 1 - \frac{1}{N}, y_{max} = 1$
\State Initialize: $f = 0$ and $k = 1$
\While{($f \neq 1$) and ($k <= K$)}
    \State $f = (0,1,e)$ feedback from $(y_{min}, y_{max}]$
    \If {($f = e$)}
        \State $y_{low} = y_{min}$
        \State $y_{min} = \frac{ (y_{min} + y_{max})}{2}$
    \EndIf
    \If {($f = 0$)}
        \State $y_{max} = y_{min}$
        \If {($y_{low} \neq 0$)}
            \State $y_{min} = \frac{(y_{low} + y_{max})}{2}$
        \Else
            \State $y_{min} = y_{max} (1 - \frac{1}{N})$
        \EndIf
    \EndIf
    \State $k = k + 1$
\EndWhile
\end{algorithmic}

\begin{remarks}
\label{rem:osa}
The key features of the opportunistic splitting algorithm are the following.
\begin{enumerate}
\item OSA aims to maximize the chances of success in every minislot. For example, with $N$ users independently and Uniformly distributed in $[0,1]$, the probability of success (identifying the user with the best channel) in a minislot with the range $(p,1]$, $N p (1 - p)^{N-1}$, is maximized at $p = 1 - \frac{1}{N}$. In fact, OSA begins contention resolution with the range $(1 - \frac{1}{N}, 1]$.
\item When a collision occurs, OSA assumes that the most likely scenario is that two users are involved in the collision, and hence, it updates the threshold from $(y_{min}, y_{max}]$ to $( \frac{y_{min} + y_{max}}{2}, y_{max}]$ (the optimal strategy if there are only two contending users).
\end{enumerate}
\end{remarks}

OSA is an effective contention resolution strategy with the average number of minislots required to resolve contention
known to be less than 2.5070 slots, independent of the number of users and channel gain distribution.

\subsection{Two User Case}
In this section, we will discuss in detail the opportunistic splitting algorithm for the two user case. The example will help us motivate the source coding framework described in the Section~\ref{sec:source_coding}.
Let $N = 2$ and let $(X_1, X_2)$ correspond to the vector channel gain of the two users in a slot. Define $Y_1 := \min\left(X_1, X_2\right)$ and $Y_2 := \max \left( X_1, X_2 \right)$. Then, $(Y_1, Y_2)$ is the ordered pair of the channel gain values where $0 \leq Y_1 \leq Y_2 \leq 1$.

OSA initializes with $y_{low} = 0, y_{min} = \frac{1}{2}$ and $y_{max} = 1$. In the first minislot, only the user(s) with $\frac{1}{2} < X_i$ transmit a control packet. A success (a single transmission) happens in the first minislot iff ($X_1 \leq \frac{1}{2} < X_2$) or ($X_2 \leq \frac{1}{2} < X_1$), i.e., a success happens iff $Y_1 \leq \frac{1}{2} < Y_2$. The probability of the event can easily be computed and is equal to $\frac{1}{2}$. \textbf{Thus, contention is resolved in the first minislot whenever $0 \leq Y_1 \leq \frac{1}{2} < Y_2 \leq 1$ and the probability of the event is $\frac{1}{2}$; the threshold that resolves the contention successfully for the set $\{ (Y_1,Y_2) : 0 \leq Y_1 \leq \frac{1}{2} < Y_2 \leq 1\}$ is $\frac{1}{2}$ and the base station feeds back a $1$ in this case}. In the first minislot, an error due to collision occurs iff $\frac{1}{2} < Y_1 \leq Y_2$ and the slot is left idle iff $Y_1 \leq Y_2 \leq \frac{1}{2}$. Suppose that the feedback in the first minislot is $e$. Then, OSA updates the variables as $y_{low} = \frac{1}{2}, y_{min} = \frac{3}{4}$ and $y_{max} = 1$. In the second minislot, only the user(s) with $\frac{3}{4} < X_i$ transmit a control packet. A success happens now iff $Y_1 \leq \frac{3}{4} < Y_2$ and the conditional probability of the event (conditioned upon a collision in the first minislot) is $\frac{1}{2}$. \textbf{Thus, contention is resolved in the second minislot whenever $\frac{1}{2} < Y_1 \leq \frac{3}{4} < Y_2 \leq 1$ and the probability of the event is $\frac{1}{8}$; the threshold that resolves the contention successfully for the set $\{ (Y_1,Y_2): \frac{1}{2} < Y_1 \leq \frac{3}{4} < Y_2 \leq 1\}$ is $\frac{3}{4}$ and the base station feeds back a $e1$ in the first two minislots}.

\begin{table}
\renewcommand{\arraystretch}{1.5}
\centering
\begin{tabular}{|c|c|c|c|c|}
\hline
Events & Threshold & Feedback & Prob \\
\hline
$0 \leq Y_1 \leq \frac{1}{2} < Y_2 \leq 1$ & $\frac{1}{2}$ &  1 & $\frac{1}{2}$ \\
$\frac{1}{2} < Y_1 \leq \frac{3}{4} < Y_2 \leq 1$ & $\frac{3}{4}$ &  e1 & $\frac{1}{8}$ \\
$0 \leq Y_1 \leq \frac{1}{4} < Y_2 \leq \frac{1}{2}$ & $\frac{1}{4}$ & 01 & $\frac{1}{8}$ \\
$\frac{3}{4} < Y_1 \leq \frac{7}{8} < Y_2 \leq 1$ & $\frac{7}{8}$ & ee1 & $\frac{1}{32}$ \\

$\frac{1}{2} < Y_1 \leq \frac{5}{8} < Y_2 \leq \frac{3}{4}$ & $\frac{5}{8}$ & e01 & $\frac{1}{32}$ \\
$\frac{1}{4} < Y_1 \leq \frac{3}{8} < Y_2 \leq \frac{1}{2}$ & $\frac{3}{8}$ & 0e1 & $\frac{1}{32}$ \\
$0 \leq Y_1 \leq \frac{1}{8} < Y_2 \leq \frac{1}{4}$ & $\frac{1}{8}$ & 001 & $\frac{1}{32}$ \\
\vdots & \vdots & \vdots & \vdots \\
\hline
\end{tabular}
\caption{The probability distribution on the threshold/feedback corresponding to OSA for $N = 2$ users.}
\label{tab:osa_two_users}
\end{table}

In Table~\ref{tab:osa_two_users}, we have listed sets of ordered two tuples along with the threshold ($y_{min}$) for OSA that resolves the set.
The feedback from the base station corresponding to the threshold (equivalently, the set) and the probability of the threshold (equivalently, the feedback) is also listed in the table.

\begin{remarks}
\label{rem:osa_2}
We make the following observations from the Table~\ref{tab:osa_two_users}.
\begin{enumerate}
\item The threshold ($y_{min}$) that resolves $(Y_1,Y_2)$
is always such that $Y_1 \leq y_{min} < Y_2$, i.e.,
OSA resolves contention by identifying a threshold between the user channel gains.
The threshold is fed back to the users in ternary alphabet $(0,1,e)$.
The lone user with a channel gain strictly greater than the threshold value fed back by the base station would learn about its successful contention resolution and the other users would refrain from transmitting any further in the slot.
\item The feedback for a threshold $y_{min}$ is, in fact, the binary expansion of $y_{min}$ (when feedback $e$ and feedback $1$ is mapped to $1$ and feedback $0$ is mapped to $0$). The feedback $1$ is equivalent to feedback $e$ followed by an EoC (end of contention) in this case.
\item The thresholds that resolve contention for OSA form a countable set with a valid probability distribution (the probabilities sum up to $1$).
\item The average delay to resolve contention is equal to the average length of the feedback, which is a function of the probability distribution of the threshold random variable.
The probability distribution is a function only of the contention resolution algorithm (for the i.i.d. case).
An optimal choice of the thresholds can minimize the average description length of the feedback and the delay to resolve contention.
\end{enumerate}
\end{remarks}
In Section~\ref{sec:source_coding}, we will propose a general framework for contention resolution for opportunistic scheduling motivated by the above observations.

\section{A Source Coding Problem}
\label{sec:source_coding}
In this section, we will formulate contention resolution for opportunistic scheduling with ternary feedback as identifying a random threshold (channel gain) that separates the best channel from the other samples. Let $(X_1, X_2, \cdots, X_N)$ correspond to the vector of i.i.d. channel gain values in a slot and let
$(Y_1, Y_2, \cdots, Y_N)$ be the ordered N-tuple of channel gain values of the $N$ users such that $0 \leq Y_1 \leq Y_2 \leq \cdots \leq Y_{N-1} \leq Y_N \leq 1$. The base station seeks to identify $\arg \max_{\{i=1,\cdots,N\}} \{ X_1, \cdots, X_N\}$, or, equivalently, $\arg \{Y_N\}$ in the slot.
We aim to resolve the contention by identifying a threshold $Y$ such that $Y_{N-1} \leq Y < Y_N$; the base station will feedback the threshold $Y$ using ternary alphabet of $0,1$ and $e$ which aids in resolving the contention.
$Y_{N-1}$ and $Y_N$ are random variables, and hence, the threshold $Y$ will also be a random variable. Obviously, the uncertainty in $Y$ would be a measure of the average description length of the threshold/feedback.

Let ${\mathsf C}: [0,1] \times [0,1] \rightarrow [0,1]$ be a code (an allocation),
which assigns for every
2-tuple $(Y_{N-1},Y_N)$ an element $Y := {\mathsf C}(Y_{N-1},Y_N) \in [0,1]$, such that $Y_{N-1} \leq Y < Y_N$.
Let $Y$ have a discrete distribution, i.e., let there exist a set $\Omega_Y = \{y_1, y_2, \cdots \}$ and a set of probabilities $\{p_{y_1}, p_{y_2}, \cdots\}$ such that $\sum_{i=1} p_{y_i} = 1$, and $p_{y_i} := {\mathsf Pr}(Y = y_i)$.
Then, the entropy of the random variable $Y$ (equivalently, the code ${\mathsf C}$) is defined as
\[ - \sum_{i=1}^{\infty} p_{y_i} \log_2(p_{y_i}) \]
Clearly, the entropy would approximate the average length of the feedback required for a contention resolution algorithm that resolves a two tuple $(Y_{N-1},Y_N)$ with threshold ${\mathsf C}(Y_{N-1},Y_N)$.

The code ${\mathsf C}(,)$ can, in general, take a continuous sample space, all of $[0,1]$ and a useful description of entropy may not be possible in such cases. For continuous distribution $F(\cdot)$, ${\mathsf Pr}(Y_{N-1} \neq Y_N) = 1$, and for every $(Y_{N-1}, Y_N)$ such that $Y_{N-1} \neq Y_N$, there is some rational $Q$ such that $Y_{N-1} \leq Q < Y_N$. Hence, we can always identify a code with a countable sample space for any continuous $F(\cdot)$ and define its entropy. Further, the feedback from the base station for any contention resolution algorithm is a finite sequence in ternary alphabet.
Hence, we will always seek a code with a discrete distribution for $Y$.
In such a framework, our objective could be to identify the code with the minimum entropy.

\begin{remarks}
\end{remarks}
\begin{enumerate}
    \item The maximal probability allocation scheme of OSA (see Section~\ref{sec:osa_source_coding})
provides us a discrete distribution for the random threshold $Y$ with a finite entropy.
    \item The ternary description of the threshold does not use all the alphabets completely. For example, the alphabet $1$ appears only at the end of every code word (EoC). Further, the code is non-singular but need not be uniquely decodable as the codes are decoded one at a time. Hence, the entropy of the threshold need not exactly measure the average feedback length (and the average delay).
    \item We have assumed that the code ${\mathsf C}$ is a function only of the two tuple $(Y_{N-1}, Y_N)$. For correlated wireless channels and for arbitrary feedback schemes, we may need to consider ${\mathsf C}$ as a function of the N-tuple $(Y_{1},\cdots,Y_N)$.
\end{enumerate}

\section{OSA as a Source Code}
\label{sec:osa_source_coding}
The opportunistic splitting algorithm with ternary feedback identifies a threshold $Y$ for every $N$ tuple ($Y_1,\cdots,Y_N)$ such that $Y_{N-1} \leq Y < Y_N$. OSA chooses the thresholds in a minislot such that the probability of success is maximized in the minislot. The following pseudo-code describes the code ${\mathsf C}: (Y_{N-1}, Y_N) \rightarrow Y$ corresponding to OSA.


\begin{algorithmic}
\State $\Omega_{()} := \{ (Y_{N-1},Y_N) : 0 \leq Y_{N-1} \leq Y_N \leq 1\}$
\State Initialize $k = 1$
\Repeat
    \State a) Define $y_k$ as
{\small    \[y_k := \arg \max_{y \in [0,1]} {\mathsf Pr}(Y_{N-1} \leq y < Y_N | (Y_{N-1}, Y_N) \in \Omega_{()})\] }
    \State b) Define $\Omega_{y_k}$ as
{\small   \[ \Omega_{y_k} = \{(Y_{N-1},Y_N) : (Y_{N-1},Y_N) \in \Omega_{()}, Y_{N-1} \leq y_k < Y_N\} \] }
    \State c) Update $\Omega_{()}$ as
{   \[ \Omega_{()} = \Omega_{()} \backslash \Omega_{y_k} \] }
    \State $k = k + 1$
\Until{$\Omega_{()} \neq$ NULL}
\end{algorithmic}

We define $\Omega_{()}$ as the set of all $2$ tuples $(Y_{N-1},Y_N)$. The code begins with identifying a threshold $y_1$ that maximizes the probability of success in $\Omega_{()}$. Every two tuple in $\Omega_{()}$ that contains the threshold $y_1$ is assigned to be resolved by the threshold; we define the above set as $\Omega_{y_1}$, the set resolved by the threshold $y_1$. The set $\Omega_{y_1}$ is now removed from $\Omega_{()}$ and the procedure continues. Define $\Omega_Y$ as the set of all thresholds defined by the above pseudo-code. From the construction of the above code, we see that,
\[ {\mathsf C}(Y_{N-1},Y_N) = \arg \max_{y \in \Omega_Y} \{ Pr(\Omega_{y}) : Y_{N-1} \leq y < Y_N\} \]
For this reason, we call OSA as the maximal probability allocation code (MPA).
The following theorem from \cite{cohen-etal-ternaryfeedback} identifies the exact threshold of OSA for a given range $(y_{min},y_{max}]$.
\begin{theorem}
Given $N$ users and thresholds $(y_{min}, y_{max}]$ (i.e., $y_{min} \leq Y_{N-1} \leq Y_N \leq y_{max}$), the $y$ that maximizes the probability of success in the interval $(y_{min}, y_{max}]$ is the unique stationary point of $(y_{max} - y) (y^{N-1} - y_{min}^{N-1})$.
\end{theorem}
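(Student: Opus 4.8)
The plan is to prove the statement in two stages: first reduce ``maximize the probability of success'' to ``maximize the explicit polynomial $g(y) := (y_{max}-y)(y^{N-1}-y_{min}^{N-1})$,'' and then show that $g$ has a unique stationary point on the relevant interval which is its global maximizer.

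First I would make the conditioning event precise. The constraint $y_{min} \le Y_{N-1} \le Y_N \le y_{max}$ is exactly the event $E = \{Y_{N-1} \ge y_{min}\} \cap \{Y_N \le y_{max}\}$ (the middle inequality is automatic), i.e.\ at least two of the $N$ i.i.d.\ Uniform$[0,1]$ samples exceed $y_{min}$ and none exceeds $y_{max}$ --- precisely the state of knowledge OSA carries into a minislot. A success at threshold $y$ is the event $\{Y_{N-1} \le y < Y_N\}$, i.e.\ exactly one sample lies above $y$. I would compute the joint probability $\Pr(\text{success at }y,\, E)$ by a direct counting argument: choose the unique winner ($N$ ways) and place it in $(y, y_{max}]$ (mass $y_{max}-y$); the remaining $N-1$ samples must lie in $[0,y]$ with at least one exceeding $y_{min}$ so that $Y_{N-1}\ge y_{min}$ (mass $y^{N-1}-y_{min}^{N-1}$). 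This gives $\Pr(\text{success at }y,\,E) = N(y_{max}-y)(y^{N-1}-y_{min}^{N-1})$. Since $\Pr(E)$ does not depend on $y$, the conditional success probability is proportional to $g(y)$, so maximizing it is equivalent to maximizing $g$. As a sanity check, $y_{min}=0,\ y_{max}=1$ recovers $(1-y)y^{N-1}$, maximized at $1-\tfrac1N$ (matching the OSA initialization), and $N=2$ gives $(y_{max}-y)(y-y_{min})$, maximized at the midpoint (matching the two-user trace).

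Next I would establish existence of an interior maximizer: $g$ is continuous on $[y_{min}, y_{max}]$, vanishes at both endpoints, and is strictly positive on the open interval (both factors are positive there since $y > y_{min} \ge 0$). Hence the maximum is attained at an interior point, which must be a stationary point of $g$.

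The core of the argument is uniqueness of the stationary point, and this is the step I expect to require the most care. My preferred route is to show $g$ is strictly log-concave on $(y_{min}, y_{max})$, where $g>0$, so that $g'=0 \iff (\log g)'=0$ and a strictly concave $\log g$ admits at most one stationary point. Writing $\log g(y) = \log(y_{max}-y) + \log(y^{N-1}-y_{min}^{N-1})$, the first term contributes $-(y_{max}-y)^{-2}<0$, and a short computation shows the second term's second derivative equals $-(N-1)y^{N-3}\,[\,y^{N-1}+(N-2)y_{min}^{N-1}\,]/(y^{N-1}-y_{min}^{N-1})^2$, which is negative for all $y>0$ and $N\ge 2$; hence $(\log g)''<0$ and uniqueness follows. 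An equivalent elementary alternative is to set $r(y):=g'(y)/y^{N-2}$ and check $r'(y) = -N + (2-N)y_{min}^{N-1}y^{1-N} < 0$, so $g'$ (which shares the sign of $r$) changes sign at most once; together with the sign change forced by $g'(y_{min})\ge 0 > g'(y_{max})$, this pins down exactly one stationary point. Either way, combining with the existence step shows the unique stationary point of $g$ is the maximizer, which is the claim. The only genuine subtlety is bookkeeping at the degenerate boundary $y_{min}=0$, where the formulas simplify to $(1-y)y^{N-1}$ and the log-concavity computation reduces to $-(N-1)/y^2$; this case is handled either separately or by noting the expressions extend continuously.
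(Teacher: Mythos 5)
The paper does not actually prove this theorem: it is stated as an imported result, attributed to the Cohen et al.\ reference, so there is no in-paper argument to compare yours against. Your proof is correct and self-contained, and it supplies exactly what the paper omits. Both stages check out: the counting identity $\Pr(\text{success at }y,\,E)=N(y_{max}-y)(y^{N-1}-y_{min}^{N-1})$ is right (choose the lone winner in $(y,y_{max}]$, put the other $N-1$ samples in $[0,y]$ with at least one above $y_{min}$ so that $Y_{N-1}\ge y_{min}$), $\Pr(E)$ is indeed independent of $y$, and the formula reproduces the paper's sanity checks ($y_1=1-\tfrac1N$ for $y_{min}=0,y_{max}=1$, and the midpoint rule for $N=2$). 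The uniqueness step is also sound: your expression for $\bigl(\log(y^{N-1}-y_{min}^{N-1})\bigr)''$ is correct and strictly negative for $y>0$, $N\ge2$, so $\log g$ is strictly concave where $g>0$, and the endpoint sign information $g'(y_{min}^+)>0>g'(y_{max})$ (or the monotonicity of $r(y)=g'(y)/y^{N-2}$) forces exactly one interior stationary point, which must be the maximizer. The one caveat you already flag is worth keeping explicit: for $y_{min}=0$ and $N\ge3$ the point $y=0$ is formally also a stationary point of $g$, so ``unique stationary point'' must be read as unique on the open interval $(y_{min},y_{max})$; with that reading the theorem as stated is exactly what you prove.
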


\textbf{Remarks}
\begin{enumerate}
\item For any $N$, and with $y_{min} = 0, y_{max} = 1$, the above expression becomes, $(1 - y)(y^{N-1}) = y^{N-1} - y^N$. The expression is maximized at $y = 1 - \frac{1}{N}$. Hence, for any $N$, $y_1 = 1 - \frac{1}{N}$.
\item As an example, for $N = 2$, repeating the above procedure will yield us $y_1 = 0.5, y_2 = 0.75, y_3 = 0.25, y_4 = 0.875, \cdots$. Note that the above values are in fact the thresholds reported in Table~\ref{tab:osa_two_users}.
    \item In Remark~\ref{rem:osa_2}, for the $N = 2$ case, we noticed that the feedback from the base station corresponding to a threshold can be viewed as the binary representation of the threshold itself. For general $N$, the feedback from the base station can still be viewed as the binary representation of the threshold, however, with the weights corresponding to a position computed from the thresholds $\{ y_k \}$ obtained from the pseudo-code. For example, the weight of the first position will be equal to $y_1$.
\end{enumerate}

\begin{center}
\begin{figure}
\includegraphics[scale=0.55]{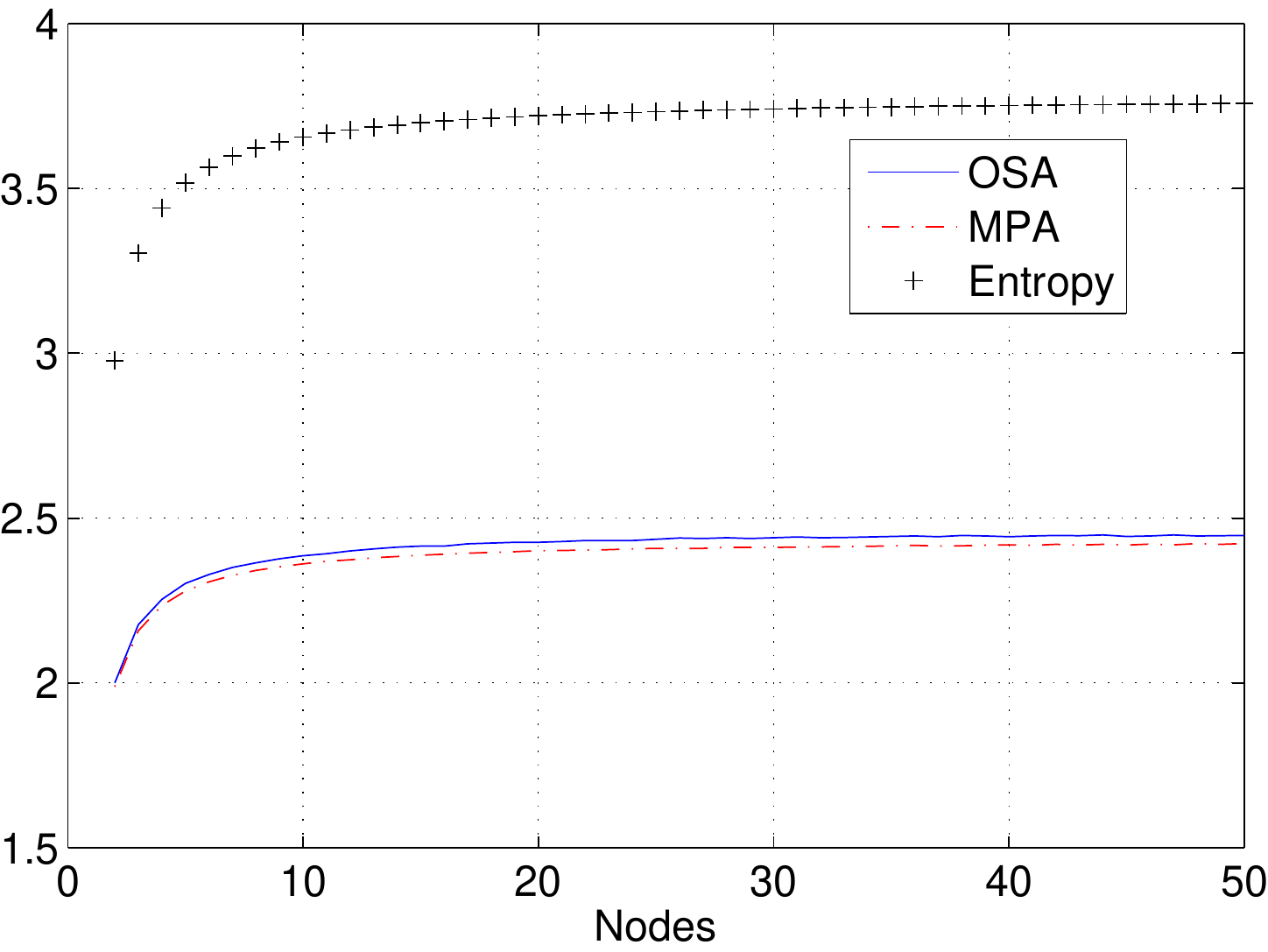}
\caption{Plot of the average delay performance of OSA and MPA as a function of the number of users $N$. We have also plotted the entropy of the threshold random variable corresponding to MPA.}
\label{fig:osa_vs_mpa}
\end{figure}
\end{center}

In Figure~\ref{fig:osa_vs_mpa}, we plot the average delay performance of OSA (as described in Section~\ref{sec:osa}) and the maximal probability allocation code. As expected, the performance of OSA and MPA are similar and in fact, MPA performs better than OSA as it identifies the optimal thresholds without any approximations (see Remark~\ref{rem:osa}).
We have also plotted in the Figure~\ref{fig:osa_vs_mpa}, the entropy of the maximal probability allocation code in bits. As expected, the entropy of the random variable reflects the average delay performance of the contention resolution algorithm as a function of $N$ very well.

\subsection{Entropy Minimization}
Entropy is a concave function of the distribution. The maximal probability allocation code identifies a local minima in the space of probability distributions. From limited numerical work (not reported in this paper), we conjecture that the maximal probability allocation code is a globally entropy minimizing strategy as well. The following theorem proves the optimality of MPA for the $N = 2$ case.

\begin{theorem}
MPA is a delay minimizing strategy and an entropy minimizing strategy for $N = 2$ case.
\end{theorem}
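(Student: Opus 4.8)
The plan is to recast the $N=2$ problem as a ternary search for the random interval $[Y_1,Y_2)$ and to show that querying the midpoint (i.e.\ MPA) is optimal for both objectives. First I would note that letting every user with $X_i > y$ transmit returns feedback $1$ exactly when $Y_1 \le y < Y_2$ (success), feedback $e$ when $y < Y_1$ (the query lies below the interval), and feedback $0$ when $y \ge Y_2$ (the query lies above it). For two i.i.d.\ uniforms these outcomes have probabilities $2y(1-y)$, $(1-y)^2$, and $y^2$. The enabling structural fact is self-similarity: conditioned on $\{Y_2 \le y\}$ the pair $(Y_1,Y_2)$ is distributed as the order statistics of two i.i.d.\ uniforms on $[0,y]$, and conditioned on $\{Y_1 > y\}$ as order statistics on $[y,1]$, so each residual problem is a rescaled copy of the original. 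MPA is exactly the strategy that always queries the midpoint of the current interval.

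For delay minimization I would write a Bellman equation for the optimal expected number of minislots $D^*$. Using self-similarity, a first query at $y$ costs
\[ D(y) = 1 + \big(y^2 + (1-y)^2\big)\,D^*, \]
since with probability $2y(1-y)$ contention resolves immediately and otherwise the residual is a scaled copy that any strategy resolves in at least $D^*$ further minislots. The bracket $y^2+(1-y)^2$ is minimized uniquely at $y=\tfrac12$ with value $\tfrac12$, so $D^* \ge 1 + \tfrac12 D^*$, giving $D^* \ge 2$; MPA, which always picks $y=\tfrac12$, attains $D^*=2$. Hence MPA is delay optimal and the midpoint is the unique per-step minimizer.

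For entropy the same self-similarity suggests a recursive functional equation, but the reduction is more delicate. The key observation is a decoupling lemma: if some codeword is assigned the \emph{entire} rectangle $R_t := \{(Y_1,Y_2): Y_1 \le t < Y_2\}$, then because the atoms are disjoint, every remaining atom must lie wholly inside $T_L(t) := \{Y_2 \le t\}$ or wholly inside $T_U(t) := \{Y_1 > t\}$, each a rescaled copy of the triangle. Granting that an entropy-optimal code can be taken to have this recursive form, the grouping property of entropy yields
\[ H^* = \min_{t}\ \frac{H\big(2t(1-t),\,t^2,\,(1-t)^2\big)}{2t(1-t)}, \]
where $H(a,b,c) = -a\log_2 a - b\log_2 b - c\log_2 c$. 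A one-variable analysis (the objective is symmetric about $t=\tfrac12$ and diverges as $t\to 0,1$) places the minimum at $t=\tfrac12$, where $H(\tfrac12,\tfrac14,\tfrac14)=\tfrac32$ gives $H^*=3$ bits, exactly the entropy of the MPA distribution $\{\tfrac12,\tfrac18,\tfrac18,\tfrac1{32},\dots\}$.

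The main obstacle is justifying the restriction to recursive codes for the entropy objective. Unlike the delay, entropy admits no one-step Bellman bound over arbitrary (possibly straddling) codes, and a pure majorization argument fails: two thresholds at $\tfrac13$ and $\tfrac23$ cover mass $\tfrac23 > \tfrac58$, so the MPA distribution does not majorize every feasible one and Schur-concavity of entropy alone is insufficient. I would close this gap with an exchange argument, showing that reassigning mass so that an extremal threshold is allotted its full rectangle $R_t$ never increases entropy (moving probability onto a larger atom lowers $-\sum p\log_2 p$) and thereby forces the recursive structure; the remaining step is then the routine calculus that $g(t)=H(2t(1-t),t^2,(1-t)^2)/(2t(1-t)) \ge 3$, with equality only at $t=\tfrac12$.
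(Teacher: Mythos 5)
Your proposal follows essentially the same route as the paper: condition on the first threshold $x$, use the self-similarity of the two uniform order statistics to obtain the recursions $D = 1/(2x(1-x))$ and $E = \bigl(-2x(1-x)\log_2(2x(1-x)) - x^2\log_2(x^2) - (1-x)^2\log_2((1-x)^2)\bigr)/(2x(1-x))$, and minimize at $x = \tfrac{1}{2}$ to get $D = 2$ and $E = 3$, matching MPA. The only place you go beyond the paper is in flagging the unproved restriction to recursive (non-straddling) codes for the entropy objective and sketching an exchange argument for it; the paper's own proof silently assumes exactly that structure, so you have identified a gap present in the published argument rather than introduced a new one.
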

\begin{proof}
Let $(x, 1]$ be the contention range in the first minislot for the delay minimizing strategy.
Conditioning on the first minislot, the optimal average delay $D$ can be written as
\[ D = 2 x (1 - x) + (1 + D) (1 - 2 x (1 - x)) \]
The optimal solution of $D$ is $2$ and is obtained at $x = \frac{1}{2}$. We note from Figure~\ref{fig:osa_vs_mpa} that the average delay of MPA is $2$ and hence, MPA is a delay optimal strategy.

Let $\{ p_i \}$ be the discrete distribution that minimizes the entropy of the code for $N = 2$ case. We will define $E(\{ p_i \}) := - \sum_i p_i \log_2(pi)$ as the minimum entropy. Then, conditioning on the first minislot (as before), we have,
\[ E(\{ p_i \}) = -2 x (1-x) \log(2 x (1-x)) + E(\{ x^2 p_i\}) + E(\{ (1-x)^2 p_i \}) \]
where,
\[ E(\{a p_i\}) = - \sum_i a p_i \log_2(a p_i) \] 
Substituting in the above expression and simplifying it, we have,
{\tiny \[ E(\{ p_i \}) = \frac{ - 2 x (1 - x) \log_2(2 x (1 - x)) - x^2 \log(x^2) - (1-x)^2 \log((1-x)^2)}{2 x (1 - x)} \] }
The above expression is minimized at $x = \frac{1}{2}$ and the minimum entropy is $3$. From Figure~\ref{fig:osa_vs_mpa}, we note that the entropy of the threshold random variable for MPA is $3$ and hence, MPA is an entropy minimizing strategy as well.
\end{proof}

\section{Two Examples}
\label{sec:two_examples}
In this section, we discuss contention resolution for two different channel scenarios, a constant channel and a correlated channel. We compare the performance of OSA/MPA with the source-coding framework and illustrate the generality of our proposed technique.

\subsection{Constant Channel}
We consider a downlink wireless channel with $3$ users. We assume that the channel gain is a constant, say $1$ unit, for all users and for all time slots. The objective of the contention resolution algorithm is to identify a user from the set of $3$ users (akin to a distributed medium access problem). Suppose that the users pick a uniform random variable, $X_i$, in $[0,1]$ independent of the other users. Then, OSA can be used to resolve contention among the $3$ users by identifying the user with the largest value of $X_i$; this is a popular strategy to apply OSA for discrete channel distributions. The average number of slots required to resolve contention using OSA is then $2.12$ slots (obtained from simulations).

The OSA, in every slot, attempts to identify a $y$ such that the probability of a unique user in the interval $(y,1]$ is maximized. Here, in this example, we note that it is more appropriate to identify a $y$ that maximizes the probability of success either in $(y,1]$ or in $[0,y)$. The following algorithm is a contention resolution strategy optimized for this problem.

\begin{algorithmic}
\State Initialize: $y_{low} = 0, y_{min} = 1 - \frac{1}{3}, y_{max} = 1$
\State Initialize: $f = 0, k = 1$
\Repeat
    \State $f = (0,1,e)$ feedback for interval $(y_{min},y_{max}]$
    \If {($f = e$)}
        \State $k = k + 1$
        \State $f = (0,1,e)$ feedback from interval $[y_{low}, y_{min})$
        \If {($f = 0$)}
            \State $y_{low} = y_{min}$
            \State $y_{min} = y_{min} + (y_{max} - y_{min}) \times (1 - \frac{1}{3})$
        \EndIf
    \ElsIf {($f=0$)}
        \State $y_{max} = y_{min}$
        \State $y_{min} = y_{min} + (y_{max} - y_{low}) \times (1 - \frac{1}{3})$
    \EndIf
    \State $k = k + 1$
\Until ($f \neq 1$)
\end{algorithmic}

Using simulations, we observe that the average effort needed to resolve contention is $1.89$ slots much less than the $2.12$ slots required by OSA. The proposed algorithm makes use of the fact that, in the event of a collision, the probability that two users are involved is significantly higher than the probability that three users are involved in the collision.

The contention resolution problem was formulated as identifying a random threshold $Y$ between $Y_1$ and $Y_2$ ($Y_1 < Y \leq Y_2$) or between $Y_2$ and $Y_3$ ($Y_2 \leq Y < Y_3$). The entropy of the proposed strategy was observed to be strictly smaller than the entropy of the maximal probability allocation scheme of OSA.

\subsection{Correlated Channel}
We consider a wireless downlink channel with $N = 2$ users. We assume that the wireless channel of the two users is correlated with the sample space, $\Omega_H = \{ (4,2), (4,6), (8,6), (8,10), (12,10), (12,14), (16,14)\}$ and with the joint probabilities $p_{H} = \{ \frac{1}{7} - 6 \epsilon, \frac{1}{7} - 5 \epsilon, \frac{1}{7} - 4 \epsilon, \frac{1}{7} - 3 \epsilon, \frac{1}{7} - 2 \epsilon, \frac{1}{7} - \epsilon, \frac{1}{7} + 21 \epsilon \}$, where $0 < \epsilon << 1$. OSA would maximize the probability of success in every minislot and hence, would consider the thresholds in the following sequence (if we restrict to integer thresholds) $15, 13, 11, 9, 7, 5$ and $3$. The average number of minislots required to resolve contention with OSA/MPA is $\frac{1}{7} (1 + 2 + \cdots + 6 + 6) \approx \frac{27}{7} \approx 4$. In general, if there are $k$ channel states, then the average number of slots required to resolve contention is approximately $\frac{k}{2}$.

Consider the following alternative strategy in resolving contention.
In the first minislot, we consider the threshold value $9$ to resolve contention. If a collision occurs in the first minislot, then the next threshold would be $13$ for the second minislot and in the event of an idle first minislot, the next threshold would be set to $7$ for the second minislot. Similarly, if there is collision in the first two minislots, then, the threshold would be set to $15$ for the third minislot and so on. If there is a unique user attempting in a minislot, the contention resolution algorithm stops. The average number of minislots required to resolve contention with this strategy is approximately $3$; in general, if there are $k$ channel states, then the average number of minislots required would be $\log(k)$. We note that, for large $k$, the above strategy is strictly optimal than the OSA. The contention resolution problem can be formulated as identifying a random threshold $Y$ such that $Y_1 \leq Y < Y_2$. Clearly, the minimum entropy for the wireless channel is approximately $\log(k)$ and is equal to the average number of minislots required to resolve contention.

The two examples clearly illustrate that a maximal probability strategy like the OSA is not optimal for all channel scenarios. Also, the source-coding technique could provide us a way to identify the optimal contention resolution strategy under general channel scenarios as well.

\section{Conclusion and Future Work}
\label{sec:conclusion}
In this paper, we have modeled contention resolution for opportunistic scheduling as a source-coding
problem. The entropy of a certain random variable is seen to approximate the average number of slots required to resolve contention.
We characterized OSA as a maximal probability allocation scheme and obtained the thresholds for contention resolution (in OSA) from its source code.

We note that MPA provides us a local optima, and we conjecture that MPA is globally optimal as well (for i.i.d. channel conditions).
We believe that the information theoretic view point can be used to develop contention
resolution algorithms for a variety of other network scenarios as well (e.g., partial network information, limited channel feedback).

\bibliographystyle{IEEEtran}

\end{document}